\newtheorem{definition}{Definition}
\newtheorem{theorem}[definition]{Theorem}
\newtheorem{lemma}[definition]{Lemma}
\newtheorem{conjecture}{Conjecture}
\let\epsilon=\varepsilon
\let\rho = \varrho
\begin{document}

\title{Cubic graphs with large circumference deficit}

\author{
Edita Máčajová${}^1$,
Ján Mazák${}^2$
\\[3mm]
\\{\tt \{macajova, mazak\}@dcs.fmph.uniba.sk}
\\[5mm]
${}^1$ Univerzita Komenského, Mlynská dolina, 842 48 Bratislava, Slovakia\\
${}^2$ Trnavská univerzita, Priemyselná 4, 918 43 Trnava, Slovakia
}

\maketitle

\begin{abstract}
The circumference $c(G)$ of a graph $G$ is the length of a longest cycle.
By exploiting our recent results on resistance of snarks, we construct infinite classes of cyclically $4$-, $5$- and $6$-edge-connected cubic graphs with circumference ratio $c(G)/|V(G)|$ bounded from above by $0.876$, $0.960$ and $0.990$, respectively.
In contrast, the dominating cycle conjecture implies that the circumference ratio of a cyclically $4$-edge-connected cubic graph is at least $0.75$.

In addition, we construct snarks with large girth and large circumference deficit,
solving Problem 1 proposed in [J. Hägglund and K. Markstr\"om,
On stable cycles and cycle double covers of graphs with large circumference, Disc. Math. 312 (2012), 2540--2544].
\end{abstract}

{\bf Keywords:} circumference, cubic graph, snark, girth

\medskip

{\bf Classification:} 05C15, 05C38

\section{Introduction}

A cycle is one of the most basic structures in a graph, so it comes as no surprise that cycles have been analysed from the very beginnings of graph theory.
This article focuses on longest cycles in cubic graphs. The {\it circumference\/} $c(G)$ of a graph $G$ is the length of a longest cycle. The {\it circumference ratio\/} is the ratio of circumference to order. The {\it circumference deficit\/} is the difference between order and circumference.

A lot of attention was given to Hamiltonian graphs, that is, graphs with zero circumference deficit. Compared to the vast tomes written on hamiltonicity, non-Hamiltonian graphs appear rather neglected despite there is plenty of investigation to be done.
The problem of determining the circumference of a given graph is NP-hard and even approximation is a very tough problem \cite{NP}, so no simple characterisations are expected. 

It transpired in many areas that the most interesting cubic graphs are those with chromatic index four. Such graphs are called {\it snarks}; we will additionally require snarks to have girth at least five and cyclic edge-connectivity at least four. We will encounter them in Section~\ref{snarks} where we prove the existence of a snark of girth at least $g$ and circumference deficit at least $g$ for every integer $g$.
Section~\ref{ratio} is devoted to upper bounds on circumference ratio; we provide linear bounds for certain classes of cyclically $4$-, $5$-, and $6$-edge-connected cubic graphs. These bounds serve as the best presently known upper bounds on general lower bounds on circumference ratio.

Each subcubic graph can be transformed into a $3$-edge-colourable graph by removing sufficiently many edges. The least number of edges that need to be removed is the {\it resistance} of the graph. There are snarks with arbitrarily large resistance (see e.g. \cite{steffen, oddness}). Most of our constructions are based on building blocks with large resistance; the usefulness of such blocks is demonstrated in Lemma~\ref{lemma1} which plays the key role in our proofs.

\section{Circumference ratio of cubic graphs}
\label{ratio}

Circumference ratio of cubic graphs strongly depends on connectivity. Since each vertex of a cubic graph is separated by three edges from the rest of the graph, the classical notions of vertex-connectivity and edge-connectivity are of limited use. A refined measure of connectivity, much more appropriate for our purpose, is provided by cyclic edge-connectivity. A graph is {\it cyclically $k$-edge-connected\/} if at least $k$ edges must be removed to disconnect it into components among which there are at least two containing a cycle. For cubic graphs, the notion of cyclic $k$-edge-connectivity coincides with $k$-vertex-connectivity and $k$-edge-connectivity for $k\in\{1,2,3\}$ \cite{atoms}.

If we allow bridges in our graphs, there are infinitely many trivial cubic graphs with circumference $5$.
What is more interesting, Bondy and Entringer \cite{be3} proved that every $2$-edge-connected cubic graph $G$ contains a cycle of length at least $4\log |V(G)|-4\log \log |V(G)|-20$. This bound is essentially best possible, as shown by Lang and Walther \cite{lw10}. 
Bondy and Simonovits \cite{bs5} conjectured the existence of a constant $c$ such that every $3$-connected cubic graph $G$ has circumference at least $|V(G)|^c$ and showed that $c \le \log_98\approx 0.946$. The conjecture was verified by Jackson \cite{jackson} for $c = \log (1+\sqrt5)-1\approx 0.694$; the constant $c$ has recently been improved to $0.753$ \cite{bbmy}. Bondy also conjectured the following.
\begin{conjecture}[{\cite[Conjecture 1]{jackson}}]
There exists a constant $c > 0$ such that every cyclically $4$-edge-connected cubic graph $G$ has circumference at least $c\,|V(G)|$.
\label{conj:bondy}
\end{conjecture}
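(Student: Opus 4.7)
The plan is to attempt Conjecture~\ref{conj:bondy} by first reducing it to the Dominating Cycle Conjecture, which predicts that every cyclically $4$-edge-connected cubic graph $G$ admits a \emph{dominating cycle}, i.e.\ a cycle $C$ such that every edge of $G$ has at least one end on $C$. Granting this, the desired linear bound follows from a one-line counting argument. Set $n=|V(G)|$, let $C$ be a dominating cycle of length $\ell$, and put $B=V(G)\setminus V(C)$. Since $C$ dominates, every vertex of $B$ sends all three of its incident edges to $C$; conversely, every vertex of $C$ has two neighbours on $C$ and hence at most one neighbour in $B$. Double counting the edges between $V(C)$ and $B$ yields $3(n-\ell)\le \ell$, so $\ell\ge \tfrac{3}{4}n$. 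This recovers the constant $c=3/4$ announced in the abstract.

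For an unconditional attack the natural first step is to refine the inductive argument of Jackson, pushed in \cite{bbmy} to exponent $0.753$ in the $3$-connected case, by exploiting cyclic $4$-edge-connectivity to forbid small cycle-separating cuts. Concretely, I would fix a longest cycle $C$ in $G$, pick a component $H$ of $G-V(C)$, and use the hypothesis that every cycle-separating cut around $H$ has size at least $4$ to argue that either $H$ is essentially trivial (a single vertex or a very short attached path) or else the subgraph formed by $H$ together with its attachments on $C$ contains a detour that can be spliced into $C$ to produce a strictly longer cycle, contradicting the choice of $C$. Iterating this ``bite-and-swap'' procedure and amortising across the iterations, one would hope to guarantee that a constant fraction of the vertices in every component of $G-V(C)$ is absorbed into $C$.

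The hard part will be exactly this amortisation. In the $3$-connected setting the known arguments inevitably lose a constant fraction of the vertices in each pendant subgraph, which is precisely why the lower bound is only $|V(G)|^c$ with $c<1$; raising $c$ to $1$ requires bounding the number of off-cycle vertices by a \emph{constant} fraction of $\ell$, not a polynomial function of $\ell$. In effect any such argument must extract from cyclic $4$-edge-connectivity alone a conclusion of the strength of the Dominating Cycle Conjecture, or at least a weakening of it in which every off-cycle vertex lies within bounded distance of the long cycle. Producing such a structural lemma without first proving the Dominating Cycle Conjecture is, as far as I can see, the critical missing ingredient and what makes Conjecture~\ref{conj:bondy} genuinely hard; the construction-oriented contributions of the present paper, limiting $c$ from above, leave a substantial gap for any future unconditional proof to close.
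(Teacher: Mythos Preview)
This statement is a \emph{conjecture}, not a theorem; the paper offers no proof of it. What the paper does say, in a single sentence immediately following the statement, is that the Dominating Cycle Conjecture implies $c\ge 3/4$ because a dominating cycle in a cubic graph on $n$ vertices has length at least $3n/4$. Your first paragraph reproduces exactly this observation, and your counting argument for the bound $\ell\ge 3n/4$ is correct (domination forces $V(G)\setminus V(C)$ to be independent, so every off-cycle vertex sends three edges to $C$ while every cycle vertex receives at most one).

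Everything after your first paragraph is not a proof but a candid discussion of why an unconditional argument is hard; you correctly identify that improving Jackson's method or its refinements from a power bound to a linear one would require a structural lemma of roughly the same strength as the Dominating Cycle Conjecture itself. That assessment is reasonable, but it leaves Conjecture~\ref{conj:bondy} exactly where the paper leaves it: open. There is therefore no divergence of method to report---neither you nor the paper proves the statement, and both point to the same conditional route via dominating cycles.
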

The dominating cycle conjecture \cite{fleischner} implies that $c\ge 0.75$ since a dominating cycle in a cubic graph $G$ has length at least $0.75|V(G)|$.
In addition, Thomassen \cite{thomassen} conjectured that there exists an integer $k$ such that every cyclically $k$-edge-connected cubic graph is Hamiltonian. This would mean $c = 1$ for sufficiently connected cubic graphs.

We summarize the currently known results together with our contributions in Table~\ref{tabulka}. The column LB displays a lower bound which holds for all graphs; the column UB shows an upper bound on circumference for a certain infinite class of graphs with required cyclic edge-connectivity. All the bounds are asymptotic; each of them is expressed as a function of order $n$.

\begin{table}[!h]
\begin{center}
\renewcommand\arraystretch{1.2}
\begin{tabular}{|c|c|c|c|}
\hline
connectivity & LB & conjectured LB & UB\\\hline
2 & $\log n$ &  & $\log n$\\\hline
3 & $n^{0.753}$ & & $n^{0.946}$\\\hline
4 & $n^{0.753}$ & $0.75n$ & $0.875n$\ $\star$\\\hline
5 & $n^{0.753}$ & $0.75n$ & $0.960n$\ $\star$\\\hline
6 & $n^{0.753}$ & $0.75n$ & $0.990n$\ $\star$\\\hline
7+ & $n^{0.753}$ & $0.75n$\ \ \hbox{and}\ \ $n$ & $n$ \\\hline
\end{tabular}
\end{center}
\caption{Summary of results on circumference (our contribution are marked by $\star$).}
\label{tabulka}
\end{table}

The crucial observation used in our construction of graphs with large circumference deficit is captured in the following lemma.

\begin{lemma}
\label{lemma1}
Let $H$ be a subgraph of a bridgeless cubic graph $G$.
If $H$ has resistance $k$, then any cycle of $G$ not contained in $H$ misses at least $k$ vertices of $H$.
\end{lemma}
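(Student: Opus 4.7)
The plan is to verify, for every cycle $C$ of $G$ not contained in $H$, the equivalent statement $r(H) \leq |V(H) \setminus V(C)|$. Writing $W = V(H) \setminus V(C)$ and $m = |W|$, this amounts to producing a set $S \subseteq E(H)$ of size at most $m$ such that $H - S$ admits a proper $3$-edge-colouring.

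The construction is based on extending a proper $3$-edge-colouring of $C$ to $H$. Since $C \not\subseteq H$, I fix an edge $e^{*} \in E(C) \setminus E(H)$, which gives me the freedom to choose a proper $3$-edge-colouring $\phi_{C}: E(C) \to \{1,2,3\}$ that, if $|C|$ is odd, uses colour $3$ on $e^{*}$. For each $v \in V(C)$, let $c_{v} \in \{1,2,3\}$ denote the \emph{missing colour}, the one not used on the two $C$-edges at $v$. I colour each edge of $E(H) \cap E(C)$ according to $\phi_{C}$ and, for each $v \in V(C) \cap V(H)$ whose unique non-$C$ edge in $G$ lies in $H$, I force that edge to colour $c_{v}$. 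By construction, the resulting partial colouring is proper at every vertex of $V(C) \cap V(H)$.

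Conflicts can therefore arise only at vertices $w \in W$: all edges of $H$ at $w$ are non-$C$, and some pair of them may have been forced (by their $V(C) \cap V(H)$-endpoints) to the same colour. My plan to eliminate these conflicts is twofold: (a) at each conflicting $w$, delete one carefully chosen incident edge and place it in $S$; and (b) choose $\phi_{C}$ --- and exchange colours along arcs of $C$ bounded by edges outside $H$ --- so as to rearrange the missing colours $c_{v}$ and thereby minimise conflicts before any deletion. The main obstacle is to prove that with these two devices at most one deletion per vertex of $W$ is ever needed, giving $|S| \leq m$. The delicate combinatorial bookkeeping required --- combining the flexibility of $\phi_{C}$ (enhanced by the bridgelessness of $G$ and the choice of $e^{*}$) with a matching-type argument over conflict patterns at vertices of $W$ --- is the hardest step I anticipate.
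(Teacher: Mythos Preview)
Your plan heads in the right direction but has two concrete problems, and it misses the simple idea that makes the lemma a three-line argument.

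First, your partial colouring is not always well-defined (or proper) at $V(C)\cap V(H)$, contrary to what you claim. If an edge $vu\in E(H)\setminus E(C)$ has both endpoints in $V(C)\cap V(H)$ --- a chord of $C$ lying in $H$ --- your rule forces it to colour $c_v$ from one side and to colour $c_u$ from the other; when $c_v\neq c_u$, whichever colour you actually assign creates a conflict at the other endpoint. Your arc-swapping device might repair some such chords, but you give no argument that it repairs all of them simultaneously.

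Second, the part you yourself flag as ``the hardest step'' --- showing that after adjusting $\phi_C$ one deletion per vertex of $W$ suffices --- is the entire content of the lemma, and you have not carried it out. A matching-type argument of the kind you gesture at is not straightforward here: a vertex $w\in W$ can have all three of its incident $H$-edges forced to the same colour by their far endpoints, so a single deletion at $w$ need not clear the conflict.

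The paper avoids both difficulties with a much simpler colouring. Since $C\not\subseteq H$, the intersection $C\cap H$ is a disjoint union of paths; colour each path alternately with $1$ and $2$, and colour every remaining edge of $H$ with colour $3$. Now every vertex of $V(C)\cap V(H)$ is automatically properly coloured (its non-$C$ edge, if present in $H$, is its unique colour-$3$ edge), so the only improperly coloured vertices --- those with all incident edges coloured $3$ --- lie in $W=V(H)\setminus V(C)$. Deleting these ``bad'' vertices leaves a $3$-edge-colourable graph, whence $|W|\ge k$. No missing-colour bookkeeping, no arc swaps, no chord problem. The elaborate $3$-colouring of $C$ is exactly what you should drop: a $2$-colouring of the paths is enough, because the third colour is needed only on the non-path edges.
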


\begin{proof}
If a cycle $C$ of $G$ is not contained in $H$, then its intersection with $H$ is a union of vertex-disjoint paths.
Take each of those paths in turn and alternately colour the edges along the path by colours $1$ and $2$. Colour all the remaining edges by $3$.
There are two types of vertices in $H$: those that have their incident edges coloured by $1$, $2$, $3$ and those with all incident edges coloured by $3$. We call the vertices of the latter type {\it bad\/}. By removing all the bad vertices we obtain a $3$-edge-colourable graph from $H$.
However, $H$ has resistance $k$, and thus there are at least $k$ bad vertices.
Obviously, no bad vertex belongs to $C$, hence $C$ misses at least $k$ vertices of $H$.
\end{proof}

In order to construct infinite classes of graphs with circumference ratio promised in Table~\ref{tabulka}, we employ cubic construction blocks described in \cite{oddness}. (We do not include construction details or proofs of their properties in this article; an interested reader can find them in \cite{oddness}.)

\begin{figure}
\begin{center}
\includegraphics[scale=.8]{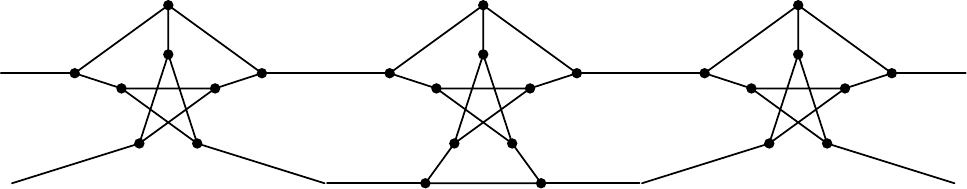}
\end{center}
\caption{The building block $N_2$ with $26$ vertices and resistance $2$.}
\label{fig:N2}
\end{figure}

The cubic graph $N_2$ has order $26$, resistance $2$, and two pairs of dangling edges (see Figure~\ref{fig:N2} and \cite[Section 7]{oddness}).
We create a graph $G$ by arranging $m\ge 2$ copies of $N_2$ along a circle and for each copy $K$, we connect one pair of dangling edges of $K$ to a pair of dangling edges of the following copy and the other pair of dangling edges to a pair in the previous copy. (The exact way of how we do it does not matter because we only need to preserve cyclic $4$-edge-connectivity.)
According to Lemma~\ref{lemma1}, a cycle of $G$ either belongs to one copy of $N_2$ or misses at least $2$ vertices in each of the $m$ copies of $N_2$, hence $G$ has circumference deficit at least $2m$ and circumference ratio at most $24m/26m = 12/13\approx 0.92$. A different idea used in Theorem~\ref{thm7/8} leads to a better upper bound for cyclically $4$-edge-connected graphs.

The construction described in the previous paragraph can be repeated with the cyclically $5$-edge-connected building block $Z$ with order $25$ and resistance $1$ (see \cite[Section 8 and Figure 5]{oddness} for a description of $Z$; this block was also used by Steffen under the name $T$ \cite[Theorem 2.3]{steffen}). The graph $Z$ has seven dangling edges naturally split into two triples and one single dangling edge. We repeat the circular construction with $m$ copies of $Z$ instead of $N_2$; the role of the pairs of dangling edges is now played by the triples. The single dangling edges are joined to a cycle of length $m$ (one dangling edge to each vertex of the cycle). The resulting graph has circumference ratio at most $24/25 = 0.96$ and is cyclically $5$-edge-connected.

A similar construction can also be used to construct cyclically $6$-edge-connected graphs; however, the details are more complicated. Section 9 of \cite{oddness} describes a cyclically $6$-edge-connected graph $M_r$ of order $99r$ with resistance at least $r$ for each even positive integer $r$, but there are no dangling edges in this graph, thus we cannot use it directly: we first have to cut a few suitable edges to obtain a block which would allow a construction of cyclically $6$-edge-connected graphs.

The graph $M_r$ is obtained by symmetrically applying superposition to a graph $L_r$ composed of $r$ circularly arranged isomorphic copies of the block $P_3$ (that is, the Petersen graph with one vertex removed). Therefore, $M_r$ also contains $r$ isomorphic blocks $A_1, A_2, \dots, A_r$ arranged along a circle. Each two consecutive blocks of $M_r$ are joined by three edges. We cut all the edges between $A_1$ and $A_2$ to form a cubic graph $M_r'$ with two triples of dangling edges. The graph $M_r'$ has order $99r$ and resistance at least $r-3$. (According to the definition of resistance, the removal of an edge can decrease resistance by at most $1$. The resistance of $M_r'$ is actually $r$, but that would require a detailed proof; we will use the obvious lower bound of $r-3$ here because it is sufficient for our purpose.)

Consequently, we can use $M_r'$ in place of $N_2$ in the above-described construction (with triples of dangling edges instead of pairs).
The resulting cyclically $6$-edge-connected cubic graph $G$ has order $m\cdot 99r$ and circumference deficit at least $m(r-3)$, thus its circumference ratio is at most $1-(r-3)/99r$. By taking a sufficiently large $r$ we can make this ratio to be arbitrarily close to $98/99\approx 0.990$.

\begin{theorem}
For each integer $m$, there exists a cyclically $4$-edge-connected cubic graph with order $8m$ and circumference $7m+2$.
\label{thm7/8}
\end{theorem}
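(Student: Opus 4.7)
My plan is to build $G_m$ as a cyclic chain of $m$ cubic construction blocks of order $8$ each, and to prove that the circumference is exactly $7m+2$ by a case analysis of how any cycle can traverse the blocks. Concretely, I would pick an $8$-vertex cubic gadget $B$ (drawn from the constructions of~\cite{oddness}) equipped with two pairs of dangling edges, and form $G_m$ by taking copies $B_1,\ldots,B_m$ and identifying the second dangling pair of $B_i$ with the first dangling pair of $B_{i+1}$ (indices modulo~$m$); a short local check confirms that $G_m$ is cyclically $4$-edge-connected and of order $8m$. The block $B$ needs two key properties: (a) every path inside $B$ whose endpoints lie on dangling edges from different pairs misses at least one vertex of $B$ (``transit resistance''~$1$, a refined application of Lemma~\ref{lemma1}); and (b) between the two dangling edges of a single pair there is a Hamilton path of $B$ (a ``U-turn'' path visiting all $8$ vertices).

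For the upper bound $c(G_m)\le 7m+2$, let $C$ be a longest cycle of $G_m$. For each block $B_i$, the intersection $C\cap B_i$ is a disjoint union of paths whose endpoints lie on dangling edges, so $B_i$ falls into one of four types: empty, U-turn (a single path using both edges of one pair), transit (a single path with one endpoint in each pair), or two-path. A flow analysis on the ``block-multigraph''---the $m$-cycle with double edges between consecutive blocks---shows that the numbers $e_i$ of dangling edges used by $C$ between $B_i$ and $B_{i+1}$ all share the same parity; in either parity, at most two blocks can fail to be of transit type. Property~(a) then forces every transit block to lose at least one vertex, yielding $|C|\le 8\cdot 2+7(m-2)=7m+2$.

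For the matching lower bound, I would exhibit the cycle explicitly: pick two blocks $B_i$ and $B_j$, use the Hamilton U-turn paths of~(b) inside them, use a length-$7$ transit path supplied by~(a) inside each of the remaining $m-2$ blocks, and stitch the pieces together along the inter-block pair-edges around the ring. This produces a cycle of length $8+8+7(m-2)=7m+2$, matching the upper bound.

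The principal obstacle lies in Step~1: locating or designing an $8$-vertex cubic block $B$ that simultaneously has transit resistance~$1$, admits a Hamilton U-turn path, and yields a cyclically $4$-edge-connected graph when chained in this circular fashion. Once such a $B$ is fixed, the flow analysis for the upper bound and the explicit construction for the lower bound are essentially bookkeeping.
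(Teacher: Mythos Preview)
Your plan is essentially the paper's own construction, but there are real gaps in both the upper and lower bound arguments, and the block you are looking for is simply the Petersen graph with two adjacent vertices removed.

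\textbf{The block and the missing second property.} The paper takes $B$ to be the Petersen graph $P$ with an edge $uv$ deleted together with its endpoints; the two dangling edges at $u$ form the input pair and those at $v$ the output pair. Your property~(a) then follows immediately: a single transit path through all eight vertices would extend by $u$ and $v$ to a Hamiltonian cycle of $P$. But the paper needs, and proves, a \emph{second} property that your list omits: two disjoint transit paths jointly miss at least one vertex of $B$ (again because otherwise reinserting $u$ and $v$ yields a Hamiltonian cycle of $P$). Your property~(a), being a statement about one path, says nothing about two disjoint transit paths together covering $B$, and this case is unavoidable.

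\textbf{The flow argument is wrong in the even case.} Your parity observation is correct --- all $e_i$ are congruent to the winding number $w$ of $C$ modulo $2$ --- but the conclusion ``at most two blocks can fail to be of transit type'' is false when the parity is even. If every $e_i=2$ (which is forced when $|w|=2$ and possible when $w=0$), then every block carries two paths, so \emph{no} block is of your single-transit type; all $m$ of them ``fail''. The correct argument is the paper's: any block containing a transit path (one or two of them) misses a vertex, by the two Petersen properties above; hence a full block contains only U-turns, and one then argues that at most two blocks can be of that kind.

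\textbf{The explicit long cycle does not stitch.} A single transit in $B_k$ forces $e_{k-1}=e_k=1$, whereas a Hamilton U-turn in $B_i$ forces $\{e_{i-1},e_i\}=\{0,2\}$; these parities clash at the interface between a U-turn block and its neighbour, so your proposed cycle cannot exist. The cycle that actually attains $7m+2$ sets one interface to $e=0$, puts Hamilton U-turns in the two blocks flanking it, and uses \emph{two} disjoint transit paths (jointly covering seven vertices) in each of the remaining $m-2$ blocks --- again relying on the second property, this time constructively.
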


\begin{proof}
Let $u$ and $v$ be two adjacent vertices of the Petersen graph $P$. We remove the path $uv$, but keep the dangling edges incident to exactly one of its endvertices; the dangling edges incident to $u$ are {\it input edges} and the dangling edges incident to $v$ are {\it output edges}. We say that a path {\it passes through\/} $B$ if it starts with a vertex incident to an input edge and ends in a vertex incident to an output edge. We say that a cycle passes through $B$ if a portion of this cycle (a path) passes through $B$. The resulting graph $B$ has two properties interesting to us.

First, if a path passes through $B$, it cannot pass through all the vertices of $B$:
otherwise we would be able to extend this path by $u$ and $v$ to a Hamiltonian cycle of $P$, but $P$ has no such cycle. 
Second, if we take two disjoint paths passing through $B$, there is at least one vertex of $B$ missed by both of these paths. Otherwise, we can extend the first path by $u$, extend the other path by $v$, and then concatenate them together by adding two edges to form a Hamiltonian cycle of $P$ which is a contradiction.

Let $G$ be the graph obtained from $m$ copies of $B$ arranged along a circle in such a way that the output edges of each copy are identified with the input edges of the following copy (see Figure~\ref{fig:chain}). The graph $G$ is cyclically $4$-edge-connected and has order $8m$.
\begin{figure}
\begin{center}
\includegraphics{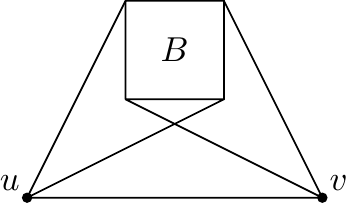}
\hskip 1.5cm
\raise 5mm\hbox{\includegraphics{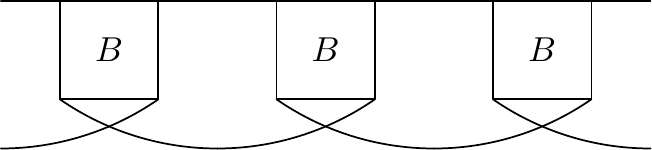}}
\end{center}
\caption{The building block $B$ and a sketch of the graph $G$.}
\label{fig:chain}
\end{figure}

Let $C$ be a cycle in $G$. Note that $C$ passes through each copy of $B$ at most twice. According to the two properties of $B$ proved above, no matter how many times $C$ passes through $B$, at least one vertex of $B$ is missed. The only possibility for $C$ to contain all vertices of $B$ is to enter by an input edge and then leave by the other input edge (of course, it can also both enter and leave by output edges, which is essentially the same situation). Consequently, the cycle $C$ misses at least one vertex in each of at least $m-2$ copies of $B$, and thus the circumference of $G$ is at most $8m-(m-2) = 7m+2$. Since $G$ contains a cycle of length $7m+2$, the derived upper bound on its circumference is tight.
\end{proof}

We propose the following strengthening of Conjecture~\ref{conj:bondy}.

\begin{conjecture}
Every cyclically $4$-edge-connected cubic graph has circumference ratio at least $7/8$.
\end{conjecture}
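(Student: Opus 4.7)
Since this conjecture strengthens Conjecture~\ref{conj:bondy}, which is itself open, what follows is a plan of attack rather than a proof. I would fix a cyclically $4$-edge-connected cubic graph $G$ on $n$ vertices together with a longest cycle $C$, set $I = V(G) \setminus V(C)$, and aim to show $|I| \le n/8$. A convenient first reduction is to the case where $C$ is a dominating cycle; this step already appeals to the dominating cycle conjecture (or to a direct structural argument replacing it), and once it is granted $I$ becomes an independent set, every $v \in I$ has three neighbours on $C$, and no two of these neighbours are consecutive along $C$, since a consecutive pair would allow us to reroute through $v$ and extend $C$.

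The second step is to turn the longest-cycle condition into quantitative separation. For each $v \in I$ with $C$-neighbours $x_1,x_2,x_3$, the three arcs of $C$ cut out by the $x_i$ are constrained by the requirement that no rerouting through $v$ produce a longer cycle. For pairs $v,w \in I$ the mutual positions of their $C$-neighbours give further inequalities: any double exchange forming a single cycle through both $v$ and $w$ must not beat $C$. I would then apply Lemma~\ref{lemma1} locally to subgraphs of $G$ that capture small collections of vertices of $I$, using resistance as a combinatorial currency translating ``many missed vertices in a subgraph'' into explicit constraints on $G$.

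The decisive and hardest step is to show that the constant $7/8$ really is the right one. The extremal construction of Theorem~\ref{thm7/8} uses copies of the Petersen graph minus an edge, and an optimal analysis would force every $v \in I$ to sit inside such a block. Proving that no other local pattern can arise once $|I|/n$ is close to $1/8$ is where the problem has resisted attack: cyclic $4$-edge-connectivity together with the longest-cycle condition does not, on the face of it, force Petersen-like structure. A plausible route is an induction on the number of cyclic $4$-edge-cuts, showing that any counterexample contains such a cut separating off a Petersen block which can then be spliced away; controlling how the longest-cycle statistic behaves across the splice is where I expect the bulk of the technical difficulty to lie.
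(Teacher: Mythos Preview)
The paper does not prove this statement; it is stated as an open conjecture immediately after Theorem~\ref{thm7/8}, with no accompanying argument. You correctly recognise that it strengthens an already-open conjecture and offer only a heuristic outline, so there is no paper-proof to compare against.

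One concrete issue in your plan deserves flagging: Lemma~\ref{lemma1} is an \emph{upper-bound} tool for circumference. It says that any cycle must miss many vertices of a high-resistance subgraph, and the paper uses it precisely to build graphs whose circumference ratio is \emph{at most} $7/8$ (and $24/25$, $98/99$, etc.). Invoking it ``locally to subgraphs of $G$'' in order to establish a \emph{lower} bound on the circumference ratio points in the wrong direction: the lemma certifies that certain local pieces force vertices to be missed, which pushes circumference down, not up. A genuine attack on the conjecture needs a mechanism that \emph{produces} long cycles, and neither the paper nor the resistance machinery supplies one. Your first reduction also leans on the dominating cycle conjecture, itself open, so the outline is doubly conditional even before the hard structural step you identify at the end.
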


\section{Large girth and large circumference deficit}
\label{snarks}

This section is motivated by the cycle double cover conjecture (CDCC).
Huck \cite{huck} showed that the smallest possible counterexample to CDCC has girth at least $12$.
Brinkmann et al. \cite{bbb, hm} proved that if a bridgeless cubic graph $G$ has a cycle of length at least $|V(G)|-10$, then $G$ has a cycle double cover.
Put together, smallest counterexamples to CDCC can only be found in the class of snarks with girth at least $12$ and with circumference deficit at least $11$.
Since no such snark has been known before, the following problem is very relevant.

\medskip

\noindent {\bf Problem} (H\"agglund and Markstr\"om \cite{hm}). For each integer $g$,
construct a snark of girth at least $g$ and circumference deficit at least $g$.

\medskip

We solve this problem in Theorem~\ref{thm3}. The construction used in the proof of Theorem~\ref{thm3} can be modified to produce snarks with arbitrarily large girth and linear circumference deficit.

\begin{theorem}
For every integer $g$ there exists a snark with girth at least $g$ and circumference deficit at least $g$.
\label{thm3}
\end{theorem}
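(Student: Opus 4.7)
The plan is to reuse the circular block construction from the discussion following Lemma~\ref{lemma1}, but with a building block $B_g$ that has both large girth and large resistance. Concretely, I aim to exhibit a subcubic block $B_g$ equipped with two dangling edges such that (i) $B_g$ has girth at least $g$, (ii) $B_g$ has resistance at least $g$, and (iii) the endpoints of the two dangling edges are far apart in $B_g$. Arranging $m \ge 2$ copies of $B_g$ in a circle and identifying dangling edges of consecutive copies yields a cubic graph $G$. Applying Lemma~\ref{lemma1} with $H$ equal to any single copy of $B_g$, I see that any cycle of $G$ either lies entirely inside that copy (and thus misses all vertices of the other $m-1$ copies) or misses at least $g$ vertices of $H$; in either case the circumference deficit of $G$ is at least $g$.

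To build $B_g$ I would start from a snark $S$ of girth at least $g$, known to exist via standard iterated superposition constructions. Although the resistance of $S$ itself may be as small as $1$, it can be amplified as follows: take $g$ disjoint copies $S_1, \ldots, S_g$, remove one edge from each to create a pair of dangling edges, and link them in a chain via subdivided connecting paths of length at least $g$; cutting the first and last links exposes the two overall dangling edges of $B_g$. By construction $B_g$ has girth at least $g$, since each $S_i$ has girth $\ge g$ and the linking paths are long enough that any cycle traversing more than one copy has length $\ge g$. The resistance lower bound follows from a pigeonhole argument: removing fewer than $g$ vertices from $B_g$ can touch at most $g-1$ of the copies, so some $S_i$ remains completely untouched, and any $3$-edge-colouring of the remainder would restrict to a $3$-edge-colouring of that $S_i$, contradicting $S_i$ being a snark.

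The main obstacle is the clean implementation of the linking: the dangling edges of consecutive $S_i$'s must be joined through the long paths in a way that preserves cubicity and the girth bound simultaneously, which typically requires inserting a few small gadgets of appropriate parity to absorb the local degree changes without shortening any cycle. Once $B_g$ is in hand, verifying that $G$ is cubic, cyclically $4$-edge-connected, of girth at least $g$, and has chromatic index $4$ is routine: chromatic index $4$ follows from the nonzero resistance of $B_g$ (which is inherited by $G$), while cyclic $4$-edge-connectivity and the global girth bound follow from property (iii) of $B_g$ combined with the properties of the circular arrangement already exploited in Section~\ref{ratio}.
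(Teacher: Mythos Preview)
Your outline has two genuine gaps.

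First, the resistance pigeonhole argument does not work as stated. You remove one edge from each copy $S_i$ to obtain dangling edges, so the block sitting inside $B_g$ is $S_i - e_i$, not $S_i$. A $3$-edge-colouring of $B_g$ minus fewer than $g$ vertices, restricted to an untouched copy, is merely a proper $3$-edge-colouring of $S_i - e_i$; this does not contradict $S_i$ being a snark, because a snark minus an edge can perfectly well be $3$-edge-colourable (for the Petersen graph it always is). To make such a restriction-argument go through you would need each $S_i - e_i$ itself to have positive resistance, which requires either starting from a snark of resistance at least $2$ and choosing $e_i$ carefully, or using a different gadget altogether. The paper sidesteps this by taking a block $H$ with resistance only $\ge 1$ but using $g$ copies of it, so Lemma~\ref{lemma1} yields one missed vertex per copy; the non-colourability of $H$ is arranged via a parity-lemma argument on a pair of glued copies of $H_0$ minus two adjacent vertices, not by edge deletion from a single snark.

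Second, with only \emph{two} dangling edges per copy of $B_g$, consecutive copies in your circular arrangement are joined by a single edge, so the edges on either side of any one copy form a cyclic $2$-edge-cut. The resulting $G$ is not cyclically $4$-edge-connected and hence is not a snark in the sense adopted in this paper. (This is why the paper's block $H$ carries five dangling edges, with two pairs going to neighbouring copies and the fifth attached to an auxiliary $g$-cycle.) A related issue is the ``long linking paths'': internal path vertices have degree $2$, so $G$ would not be cubic, and any small gadget inserted to restore cubicity risks creating cycles shorter than $g$; you cannot leave this to a parenthetical remark.
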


\begin{proof}
We construct the desired snark for every integer $g\ge 5$ which is enough to prove the theorem.

Let $H_0$ be a snark of girth at least $g$; the existence of such snarks has been proved by Kochol \cite{kochol}.
Let $v_1$ and $v_2$ be two adjacent vertices of $H_0$ and let $e_i$ and $f_i$, for $i\in\{1,2\}$,
be the two edges incident to $v_i$ and not incident to $v_{3-i}$.
Let $H_1$ be the cubic graph obtained from $H_0$ by removing $v_1$ and $v_2$ while keeping the dangling edges $e_1$, $e_2$, $f_1$, $f_2$.
The well-known parity lemma assures that the edges $e_1$ and $f_1$ have the same colour in every $3$-edge-colouring of $H_1$
(otherwise $H_0$ would be $3$-edge-colourable, but it is a snark).

\begin{figure}
\begin{center}
\includegraphics{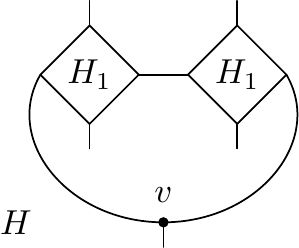}
\end{center}
\caption{The graph $H$.}
\label{fig:H_1}
\end{figure}

Take two copies of $H_1$ and join them as indicated in Fig.~\ref{fig:H_1} (the edge $f_1$ of the first copy is identified with the edge $f_1$ of the second copy and the edges $e_1$ of both copies are attached to an additional vertex $v$). If $H$ was $3$-edge-colourable, then the colour of $e_1$ of the first copy of $H_1$ would be the same as the colour of $f_1$ and, in turn, the same as the colour of $e_1$ of the second copy, leading to a contradiction at $v$. Hence, $H$ is not colourable and has resistance at least $1$. Moreover, there is no cycle of length less than $g$ in $H$, and any path with endvertices incident to dangling edges of $H$ passes through at least $g-1$ vertices of $H$.

Let $G$ be a cubic graph obtained from $g$ copies of $H$ arranged along a circuit in such a way that two dangling edges of a copy of $H$ are attached to the previous copy and two of them are attached to the next. The remaining $g$ edges can be joined to a cycle of length $g$ in an arbitrary way preserving maximum degree $3$.
The graph $G$ clearly has girth at least $g$, is cyclically $4$-edge-connected and is not $3$-edge-colourable. Any cycle of $G$ not contained in $H$ misses at least one vertex in each copy of $H$ thanks to Lemma~\ref{lemma1}, and thus $G$ has circumference deficit at least $g$.
\end{proof}

\noindent{\bf Acknowledgements.} This work was supported from the APVV grants APVV-0223-10 and ESF-EC-0009-10 within the EUROCORES Programme EUROGIGA (project GReGAS) of the European Science Foundation.

\end{document}